\newtheorem{theoremx}{Theorem}
\newtheorem{corollaryx}{Corollary}
\newtheorem{remark}{Remark}
\newtheorem{lemma}{Lemma}
\begin{document}
\title{\textbf{Time-varying Formation Tracking of Multiple Manipulators via Distributed Finite-time Control}
\thanks{Citation Information: Ming-Feng Ge, Zhi-Hong Guan, Chao Yang, Tao Li, \& Yan-Wu Wang. Time-varying formation tracking of multiple manipulators via distributed finite-time control. Neurocomputing, 2016, 202: 20-26. http://dx.doi.org/10.1016/j.neucom.2016.03.008}
 }
\vspace{1cm}
\author{\textrm{\textbf{Ming-Feng Ge\thanks{First author. Email: fmgabc@163.com (M.-F Ge).}, Zhi-Hong Guan\thanks{Corresponding author. Email: zhguan@mail.hust.edu.cn (Z.-H Guan).}, Chao Yang, Tao Li, Yan-Wu Wang}} \\ \\
{\small \textsl{College of Automation}}\\
{\small \textsl{Huazhong University of Science and Technology,
Wuhan,
430074, P. R. China}}\\
 }
\date{}
\maketitle

\begin{abstract}
Comparing with traditional fixed formation for a group of dynamical systems,
time-varying formation can produce the following benefits:
i) covering the greater part of complex environments; ii) collision avoidance.
This paper studies the time-varying formation tracking for multiple manipulator systems
(MMSs) under fixed and switching directed graphs with a dynamic leader, whose acceleration cannot change too fast.
An explicit mathematical formulation of time-varying formation is developed
based on the related practical applications.
A class of extended inverse dynamics control algorithms combining with distributed sliding-mode estimators are developed to
address the aforementioned problem.
By invoking finite-time stability arguments, several novel criteria (including sufficient criteria, necessary and sufficient criteria)
for global finite-time stability of MMSs are established.
Finally, numerical experiments are presented to verify the effectiveness of the theoretical results. \vspace{1em}

\noindent \textbf{Keywords}\quad time-varying formation tracking, dynamic leader, multiple manipulator systems (MMSs), finite-time stability.
 \vspace{1em}
\end{abstract}

\section{ Introduction}

Recently, distributed control problems for a group of dynamical systems have attracted much attentions
due to its wide applications, including
coordination for multi-agent systems {\cite{WSSSW}}-{\cite{LiuZW01}},
synchronization in complex networks {\cite{LuYu,LuYu1}},
distributed computing in sensor networks \cite{Serpen}-\cite{OlfatiS},
multi-fingered hand grasping and manipulation \cite{Ueki,fmg03}.
Formation control is a significant issue in the distributed control field.
A formation is defined as a special configuration ($i.e.$, desired positions and orientations)
formed by a cluster of interconnected autonomous agents, in which a global goal is achieved cooperatively \cite{BSCW}.
Many formation control methods have been developed, such as
virtual structure methods {\cite{fmg19}}, behavior-based methods {\cite{fmg22,fmg25}},
leader-follower methods {\cite{Mahmood}}, artificial potential field methods {\cite{GW}}.
The aforementioned methods can only produce fixed formations for multi-agent systems.
However, in a number of real-world applications,
the formation of multi-agent systems is always changing to adapt to the dynamical changing environment.
It follows that the fixed formations cannot satisfy
the practical requirements of many real-world applications.
It thus motivates several research on time-varying formations.
Time-varying formation control algorithms for a group of unmanned aerial vehicles
with its applications to quadrotor swarm systems had been presented based on consensus theory {\cite{DYSZ}}.
Coherent formation control of a set of agents, including unmanned aerial vehicles and unmanned ground vehicles,
in the presence of time-varying formation had been studied in {\cite{RAN}}.
Time-varying formation implies that the formation of a multi-agent system can be changing as required
without losing system stability, which products the following benefits:
i) covering the greater part of complex environments; ii) collision avoidance.
However, to the authors' knowledge, the mathematical formulations of time-varying formation tracking are still not clear,
which impedes the development and applications of the relative technologies.

On the other hand, networked robotic systems have been broadly studied due to their various advantages,
including flexibility, adaptivity, fault tolerance, redundancy, and the possibility to
invoke distributed sensing and actuation \cite{AACM}.
Many control algorithms for global asymptotic tracking of networked robotic systems described by
Euler-Lagrange systems can be found in the literature.
Adaptive control approaches are proposed to address the leader-follower and
leaderless coordination problems for multi-manipulator systems based on graph theory {\cite{CLHT,CLZGHT}}.
Distributed containment control had been developed for global asymptotic stability of
Lagrangian networks under directed topologies containing a spanning tree {\cite{Mei01}}.
Some distributed average tracking algorithms had been developed invoking extended {PI} control
and applied to networked Euler-Lagrange systems {\cite{ChenFeng}}.
The task-space tracking control problems of networked robotic systems under strongly connected graphs
without task-space velocity measurements had been investigated \cite{LiangWang}.
In presence of kinematic and dynamic uncertainties, task-space synchronization had been addressed for multiple manipulators
under strong connected graphs by invoking passivity control {\cite{Wang01}} and adaptive control {\cite{Wang02}}.
All of the aforementioned control algorithms produce global asymptotic tracking
of robotic manipulators, which implies that the system trajectories
converge to the equilibrium as time goes to infinity.
Finite-time stabilization of dynamical systems may give rise to
fast transient and high-precision performances besides finite-time
convergence to the equilibrium, and a lot of work has been done in
the last several years \cite{XZFXT}-\cite{HWWS}.

Motivated by our preliminary work on distributed control {\cite{Guan01,Guan02}},
the time-varying formation tracking of multiple manipulator systems (MMSs) is taken into account.
Distributed finite control is developed to drive the centroid of the MMS to follow the leader
at a distance and to achieve the desired time-varying formation of the MMS meanwhile.
The main contributions are summarized as following:
i) Comparing with the existing work based on multi-agent systems with
single-integrator and double-integrator dynamics \cite{AACM},
we consider {MMSs} described by {Euler-Lagrange} systems.
ii) Comparing with the existing fixed formation tracking algorithms for multi-agent system {\cite{Yoo}},
we consider the time-varying formation tracking problems with a dynamic leader and present an explicit mathematical
formulation of time-varying formation based on its practical characteristics.
iii) Some novel estimator-based finite-time control algorithms are developed for the above time-varying formation tracking problems.
For the presented control algorithms, some conditions (including {sufficient conditions}, {necessary and sufficient conditions})
are derived to guarantee the achievement of time-varying formation tracking.

The rest of this paper is organized as follows: system formulation and some preliminaries are presented in Section 2.
The control algorithms and conditions of time-varying formation tracking are given in Section 3.
In Section 4, the simulation results are presented. The conclusions are provided in Section 5.

\section{ Preliminaries}
\subsection { System formulation}

The dynamics of the \textit{i}th manipulator in the {MMS} is given as following \cite{Lewis}:
\begin{equation}\label{1.1}
{\mathcal H}_i(q_i){\ddot q}_i + {\mathcal C}_i(q_i,\dot q_i){\dot q_i} + g_i(q_i) = \tau_i,
\end{equation}
where $i \in {\mathcal V} = \{ 1,2,\cdots,n \}$, $t \in {\mathcal J} = [t_0,\infty)$, $t_0 \geq 0$ is the initial time,
$q_i, \dot q_i$ and $\ddot q_i \in {\mathbb R}^m$ are the position, velocity and acceleration vectors of generalized coordinates,
${\mathcal H}_i(q_i)$ and ${\mathcal C}_i(q_i,\dot q_i) \in {\mathbb R}^{m \times m}$ are the inertia and the Coriolis/centrifugal force matrices,
$g_i(q_i)$ and $\tau_i \in {\mathbb R}^m$ denote the gravitational torque and the input torque respectively.

The leader for the MMS is given as following:
\begin{equation*}
\left\{\begin{array}{lll}
  \dot x_0 = v_0,\\
  \dot v_0 = a_0,
  \end{array} \right.
\end{equation*}
where $x_0,v_0,a_0 \in {\mathbb R}^m$ are the position, velocity and acceleration vectors of generalized coordinates respectively.

We invoke a directed graph ${\mathcal G} = \{ \mathcal V, \mathcal E, \mathcal W \}$ to describe the interaction of the MMS,
where $\mathcal V$ denotes the node set given right after (\ref{1.1}), $\mathcal E \subseteq {\mathcal V}^2$ is the edge set,
$\mathcal W = [w_{ij}]_{n \times n}$ represents the adjacency matrix.
The \textit{i}th node denotes the \textit{i}th manipulator in the MMS.
An edge $\{ j,i \} \in {\mathcal E}$ denotes that the \textit{i}th node can access information from the \textit{j}th node.
The adjacency weight $w_{ij}$ is defined as $w_{ij} > 0$ if $\{ j,i \} \in {\mathcal E}$, and $w_{ij} = 0$ otherwise.
Besides, self-edges are not allowed in this paper, $i.e.$, $w_{ii} = 0$.
A directed path from the \textit{i}th node to the \textit{j}th node is an ordered sequence of edges $\{ i_1,i_2 \}, \{ i_2,i_3 \}, \cdots,$
in the directed graph.
The neighbor set of the \textit{j}th manipulator is denoted by ${{\mathcal N}_i} = \{ {j \in {\mathcal V}~|~(j,i) \in \mathcal E} \}$.
${\mathcal G}$ is said to be undirected if and only if $\{ j,i \} \in {\mathcal E} \Leftrightarrow \{ i,j \} \in {\mathcal E}$,
$i.e.$, $w_{ij} = w_{ji} $, $\forall i,j \in \mathcal V$.
Throughout this paper, ${\mathcal G}$ is supposed to be undirected.
Let $\mathcal{P} = [{p_1},{p_2},\ldots,{p_n}]^T$ be the nonnegative weight vector between the \textit{n} nodes and the leader,
where ${p_i} > 0$ if the information of the leader is available to the \textit{i}th node,
namely, the \textit{i}th node is pinned; ${p_i} = 0$ otherwise.
The {Laplacian} matrix ${\mathcal L}$ for ${\mathcal G}$ is defined as
$l_{ii} = \sum \nolimits_{j=1}^n w_{ij}$ and $l_{ij} = - w_{ij}, i\neq j$.
Two assumptions throughout this paper are presented as following: \vspace{0.2cm}\\
\emph{\textbf{A1)} The leader has a directed path to the nodes in the set $\mathcal V$ under $\mathcal{G}$ and $\mathcal{P}$}; \\
\emph{\textbf{A2)} $\mathop {\sup }\nolimits_{t \in \mathcal J} \| \dot a_0(t) \| < \beta$,
where $\| \cdot \|$ represents the Euclidean norm and $\beta$ is a positive constant.} \vspace{0.2cm}\\
By {Assumption \emph{\textbf{A2}}}, the derivative of the acceleration $a_0(t)$ of the leader is bounded,
which happens to be the actual characteristics of the trajectories that can be reachable by the manipulators
described by Euler-Lagrange system \cite{MengDimarogonas}.

\begin{lemma}{\cite{fmg09}}\label{l5}
Suppose that Assumption \textbf{A1} holds. $\mathcal{M} = (\mathcal{L} + diag \{ \mathcal{P} \}) \otimes I_m \in {\mathbb R}^{mn \times mn}$ is symmetric positive definite,
where $\otimes$ denotes the Kronecker product and $I_m \in {\mathbb R}^{m \times m}$ represents the identity matrix.
\end{lemma}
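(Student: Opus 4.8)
The plan is to reduce the claim about the $mn \times mn$ matrix $\mathcal{M}$ to the corresponding claim about the $n \times n$ matrix $\mathcal{H} := \mathcal{L} + diag\{\mathcal{P}\}$, and then to prove that $\mathcal{H}$ itself is symmetric positive definite. For the reduction I would invoke the standard Kronecker-product identities: since $(\mathcal{H} \otimes I_m)^T = \mathcal{H}^T \otimes I_m$, symmetry of $\mathcal{H}$ yields symmetry of $\mathcal{M}$; and since the spectrum of $\mathcal{H} \otimes I_m$ is exactly the spectrum of $\mathcal{H}$ with each eigenvalue repeated $m$ times, positive definiteness of $\mathcal{H}$ transfers immediately to $\mathcal{M}$. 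Symmetry of $\mathcal{H}$ is quick: because $\mathcal{G}$ is undirected we have $w_{ij} = w_{ji}$, so $\mathcal{L}$ is symmetric, and $diag\{\mathcal{P}\}$ is symmetric as a diagonal matrix.

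For positive definiteness I would argue through the quadratic form. For any $x = (x_1, \ldots, x_n)^T \in \mathbb{R}^n$ one has
\[
x^T \mathcal{H} x = \frac{1}{2} \sum_{i,j} w_{ij}(x_i - x_j)^2 + \sum_{i} p_i x_i^2,
\]
a sum of nonnegative terms, so $\mathcal{H}$ is at least positive semidefinite. It then suffices to show the form vanishes only at $x = 0$. If $x^T \mathcal{H} x = 0$, both sums must vanish separately: the pinning term forces $x_i = 0$ at every pinned node (every $i$ with $p_i > 0$), while the Laplacian term forces $x_i = x_j$ whenever $w_{ij} > 0$, i.e., $x$ is constant on each connected component of $\mathcal{G}$.

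The crux is to convert Assumption \textbf{A1} into the statement that every connected component of $\mathcal{G}$ contains at least one pinned node. I would prove this by contradiction: were there a component $C$ with no pinned node, then in the augmented graph obtained by joining the leader to the pinned nodes there would be no edge entering $C$, so the leader could reach none of its nodes, contradicting \textbf{A1}. With this in hand, constancy of $x$ on each component combined with $x = 0$ at the pinned node sitting in that component forces $x \equiv 0$ on the whole component, hence $x = 0$ throughout. Therefore $\mathcal{H}$ is positive definite, and by the Kronecker reduction so is $\mathcal{M}$. I expect the main obstacle to be precisely this graph-theoretic step linking reachability from the leader to the presence of a pinned node in each component; the algebraic ingredients (the Kronecker spectrum and the quadratic-form expansion) are routine.
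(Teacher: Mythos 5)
Your proposal is correct. Note, though, that the paper offers no proof of this lemma at all: it is imported verbatim from \cite{fmg09} (Cao and Ren), so there is no in-paper argument to compare against. Your reconstruction is the standard one and it is sound: the Kronecker identities $(\mathcal{H}\otimes I_m)^T=\mathcal{H}^T\otimes I_m$ and $\mathrm{spec}(\mathcal{H}\otimes I_m)=\mathrm{spec}(\mathcal{H})$ (with multiplicity $m$) correctly reduce the claim to $\mathcal{H}=\mathcal{L}+diag\{\mathcal{P}\}$; the quadratic-form expansion $x^T\mathcal{H}x=\tfrac{1}{2}\sum_{i,j}w_{ij}(x_i-x_j)^2+\sum_i p_i x_i^2$ is valid precisely because the paper assumes $\mathcal{G}$ undirected (so $w_{ij}=w_{ji}$) and $w_{ij},p_i\geq 0$; and your graph-theoretic step correctly translates Assumption \textbf{A1} into ``every connected component of $\mathcal{G}$ contains a pinned node,'' which is exactly what is needed to kill the null space. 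The only caveat worth recording is that the argument genuinely depends on the undirectedness of $\mathcal{G}$; for a general directed graph satisfying \textbf{A1}, $\mathcal{L}+diag\{\mathcal{P}\}$ is nonsingular with eigenvalues in the open right half-plane but need not be symmetric, which is why the paper restricts to undirected $\mathcal{G}$ throughout.
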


\subsection {Problem Statement}

In a number of real-world applications, the desired formation for the MMS is required to be time-varying and switching according to task demands.
In this section, the explicit mathematical definition of time-varying formation tracking is presented.

Let ${\digamma_{0-k}} = \{ {{\digamma _0},{\digamma _1}, \ldots ,{\digamma _k}} \}$ be a finite set of desired formations,
where $\digamma _s = \{ \eta_{s1},\eta_{s2},\ldots,\eta_{sn} \}$ denotes the \textit{s}th desired formation,
$\eta_{si} \in {\mathbb R}^m$ denotes the local coordinate of the \textit{i}th manipulator in the
\textit{m}-dimensional Euclidean space with respect to $\digamma _s$, $\forall s = 0, 1, \ldots, k$.
Note that $\digamma _s$ becomes a desired geometric pattern in 2D plane if $m = 2$.
Let ${\mathcal I} = \{ 0,1,\ldots,k \}$ denote the index set of ${\digamma_{0-k}}$.
A switching signal ${\sigma(t)}: \mathcal J \to \mathcal I$ is introduced with a sequence of time points $\{ t_1,t_2,\ldots,t_s,\ldots \}$,
satisfying ${t_0} < {t_1} <  \cdots   < {t_s} <  \cdots$, at which the desired formation changes.
Let $\digamma(t)$ be the desired formation at time $t$.
Then for any $t \in [t_s,t_{s+1})$, the desired formation $\digamma(t) = \digamma_{\sigma(t)} = \digamma _s \in \digamma _{0-k}$.
Besides, we assume that the desired formation is closed at each time instant, $i.e.$,
$\sum\nolimits_{i = 1}^n {{\eta _{si}} = 0}$, $\forall s \in \mathcal I$.

The \textbf{control objective} is to design distributed control $\tau_i$ for the \textit{i}th manipulator by invoking
its information ($i.e.$, $q_i$, $\dot q_i$ and $\eta_{si}$) and its neighbour node's states
($i.e.$, $q_j$, $\dot q_j$ and $\eta_{sj}$ for $j \in {\mathcal N}_i$) such that for any $t \in [t_s,t_{s+1})$,
the \textbf{time-varying formation tracking} is said to be achieved for the {MMS}, $i.e.$,
\begin{equation}\label{1.2}
\left\{\begin{array}{lll}
\mathop {\lim }\limits_{t \to {t_f^s}}  \left \| {q_i} - {q_j} - \eta_{si} + \eta_{sj} \right \|  =  0, \\ \\
\mathop {\lim }\limits_{t \to {t_f^s}}  \left \| \frac{1}{n} \sum\limits_{i = 1}^n {q_i}  - {x_0} \right \| = 0,\\ \\
\mathop {\lim }\limits_{t \to {t_f^s}}  \left \| \dot q_i  - {v_0} \right \| = 0,
\end{array}\right.
\end{equation}
where $t_f^s$ denotes the settle time.
In this paper, we assume that the minimum switching interval $h = \mathop {\min }\limits_s  (t_{s+1} - t_s)$ is large
enough such that $t_f^s$ can be included in the half-open interval $[t_s, t_{s+1})$, $\forall s = 0,1,\cdots$.

\begin{remark}
Note that (\ref{1.2}) means that for any $[{t_s},{t_{s + 1}})$, the MMS converge to the desired formation $\digamma _s$
and the centroid of the MMS follows the leader before time $t_{s + 1}$.
By designing time-varying formations, the obstacle and collision avoidance can be achieved
while the centroid follows the leader.
It is worthy to point out that the control problem addressed in \cite{Yoo}
is a special case of (\ref{1.2}).
\end{remark}

\subsection {Finite-time stability}
Some concepts for {finite-time stability} and {homogeneous} systems are introduced in this section \cite{fmg26}.
Consider a \textit{k}-dimensional system
\begin{equation}\label{1.3}
\dot z = f(z), ~f(0) = 0, ~z(t_0) = z_0, ~z \in {{\mathbb R} ^k},
\end{equation}
where $k$ is an arbitrary positive integer.
The continuous vector field $f(z) = {\rm col}(f_1(z),f_2(z),\ldots,f_k(z))$ is {homogeneous} of degree
$\lambda \in \mathbb R$ with dilation $(\gamma _1,\gamma _2, \ldots ,\gamma _k)$,
if for any $\varepsilon  > 0$,
\begin{equation*}
  {f_i}({\varepsilon ^{{\gamma _1}}}{z_1},{\varepsilon ^{{\gamma _2}}}{z_2}, \ldots ,\varepsilon ^{\gamma _k} {z_k})
  = {\varepsilon ^{\lambda  + {\gamma _i}}}{f_i}(z),
\end{equation*}
where $i = 1,2 \ldots ,k$.
System (\ref{1.3}) is said to be homogeneous if its vector field is homogeneous.
Additionally, the following \textit{k}-dimensional system
\begin{equation}\label{1.4}
\dot z = f(z)+ \tilde f(z), \ \tilde f(0) = 0,
\end{equation}
is called being {locally homogeneous} of degree $\lambda \in \mathbb R$ with dilation $(\gamma _1,\gamma _2, \ldots ,\gamma _k)$,
if system (\ref{1.3}) is homogeneous and the continuous vector field $\tilde f(z)$ satisfies
\begin{eqnarray*}
\mathop {\lim }\limits_{\varepsilon  \to 0} \frac{{{\tilde f}_i}({\varepsilon ^{{\gamma _1}}}{z_1},{\varepsilon ^{{\gamma _2}}}{z_2},
\ldots ,{\varepsilon ^{{\gamma _k}}}{z_k})}{{{\varepsilon ^{\lambda  + {\gamma _i}}}}} = 0, ~\forall z \neq 0, i = 1,2 \ldots ,k.
\end{eqnarray*}
Based on the above presentations, some results and lemmas in \cite{fmg26}-\cite{Rosier} which will be used in this paper are proposed here.
\begin{lemma}\label{l1}
{(LaSalle's Invariance Principle)} Let $z(t)$ be a solution of
$\dot z = f(z)$, $z(t_0) = {z_0} \in {{\mathbb R} ^k},$
where $t_0$ is the initial time, $f:U \to {{\mathbb R} ^k}$ is continuous with an open subset $U$ of ${\mathbb R} ^k$,
and $V: U \to {\mathbb R}$ be a locally Lipschitz function such that ${D^ + }V(z(t)) \leq 0$, where ${D^ + }$ denotes the upper {Dini} derivative.
Then ${\Theta ^ + }({z_0}) \cap U$ is contained in the union of all solutions that remain in $S = \{ z \in U:{D^ + }V(z) = 0\} $,
where ${\Theta ^ + }({z_0})$ denotes the positive limit set.
\end{lemma}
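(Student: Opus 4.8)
The plan is to follow the classical three-part argument for invariance principles, adapted to the nonsmooth setting in which $V$ is merely locally Lipschitz. First I would establish that $V(z(t))$ is nonincreasing along the solution. Since $z(t)$ is continuous and $V$ is continuous, the composition $t \mapsto V(z(t))$ is continuous; combining this with the hypothesis $D^+ V(z(t)) \leq 0$ and the standard comparison fact that a continuous real function with nonpositive upper Dini derivative is nonincreasing, one concludes that $V(z(t))$ decreases monotonically. Assuming the trajectory remains in a compact subset of $U$ (so that $V \circ z$ is bounded below by continuity), the monotone limit $c = \lim_{t \to \infty} V(z(t))$ exists.

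Next I would show that $V \equiv c$ on the positive limit set. Fix any $p \in \Theta^+(z_0) \cap U$. By the definition of $\Theta^+(z_0)$ there is a sequence $t_n \to \infty$ with $z(t_n) \to p$, and continuity of $V$ gives $V(p) = \lim_n V(z(t_n)) = c$. Since $p$ was arbitrary, $V$ is identically equal to $c$ throughout $\Theta^+(z_0) \cap U$.

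The crucial step is the invariance of the positive limit set: I would prove that any solution $\psi$ with $\psi(t_0) \in \Theta^+(z_0)$ satisfies $\psi(t) \in \Theta^+(z_0)$ for all $t$ in its interval of existence. Once invariance is in hand, the conclusion is immediate: along such an invariant solution $\psi$ we have $V(\psi(t)) \equiv c$, so its upper Dini derivative vanishes, i.e.\ $D^+ V(\psi(t)) = 0$ and hence $\psi(t) \in S$ for every $t$. Thus each point of $\Theta^+(z_0) \cap U$ lies on a complete solution contained in $S$, which is exactly the assertion that $\Theta^+(z_0) \cap U$ is contained in the union of all solutions remaining in $S$.

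The main obstacle will be the invariance step, because the hypothesis only guarantees continuity of $f$, so solutions of $\dot z = f(z)$ need not be unique; the usual flow-based proof of invariance must therefore be replaced by a sequential compactness argument that extracts a limiting solution through the points $z(t_n)$ and verifies that it stays in the limit set. Care is also needed in the monotonicity step to handle the Dini derivative rather than an ordinary derivative, and in assuming (or deriving from the problem context) that the trajectory is precompact in $U$, so that $\Theta^+(z_0)$ is nonempty and $V$ is bounded below.
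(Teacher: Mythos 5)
The paper contains no proof of this lemma: it is quoted directly from the cited sources (Rouche--Habets--Laloy and Hong et al.), so there is no in-paper argument to compare yours against. Judged on its own, your outline is the standard and correct route: monotonicity of $V\circ z$ from the nonpositive upper Dini derivative, constancy of $V$ on the positive limit set, and (quasi-)invariance of that set. You have also identified the one genuinely delicate point: because $f$ is only continuous, solutions need not be unique, so the flow-based invariance argument is unavailable and must be replaced by an Arzel\`{a}--Ascoli extraction of a limit solution through each limit point; this non-uniqueness is precisely why the conclusion is phrased as containment in the \emph{union of all solutions that remain in} $S$ rather than in the largest invariant subset of $S$. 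Two points you should make explicit in a full write-up: (i) the identification $D^+\bigl(V\circ\psi\bigr)(t)=D^+V(\psi(t))$, which you need in order to pass from constancy of $V\circ\psi$ to $\psi(t)\in S$, relies on the local Lipschitz continuity of $V$ (for merely continuous $V$ the two Dini derivatives need not agree); (ii) some precompactness of the positive orbit inside $U$ must either be assumed or supplied by the context in which the lemma is applied, since otherwise $c=\lim_{t\to\infty}V(z(t))$ need not exist and $\Theta^+(z_0)\cap U$ may be empty, making the statement vacuous. With those caveats recorded, your plan is sound.
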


\begin{lemma}\label{l2}
Suppose that system (\ref{1.3}) is {homogeneous} of degree $\lambda \in \mathbb R$ with dilation
$({\gamma _1},{\gamma _2}, \ldots ,{\gamma _k})$, $z = 0$ is its asymptotically stable equilibrium.
If homogeneity degree $\lambda < 0$, the equilibrium of system (\ref{1.3}) is finite-time stable.
Moreover, if system (\ref{1.4}) is locally homogeneous,
the equilibrium of system (\ref{1.4}) is locally finite-time stable.
\end{lemma}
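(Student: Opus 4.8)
The plan is to reduce the finite-time claim to a scalar differential inequality of the form $\dot V \le -c\,V^{\alpha}$ with $\alpha \in (0,1)$, which is the hallmark of finite-time convergence. The engine for this is a homogeneous Lyapunov function, obtained from a converse theorem, together with a compactness argument on a level set. First I would invoke the converse Lyapunov theorem for homogeneous systems (Rosier's result, among the references cited just before the lemma): since $z = 0$ is an asymptotically stable equilibrium of the continuous homogeneous field $f$ of degree $\lambda$ with dilation $(\gamma_1,\ldots,\gamma_k)$, there exists a continuous, positive-definite, radially unbounded function $V:\mathbb{R}^k \to \mathbb{R}$ that is itself homogeneous of some degree $\mu > \max_i \gamma_i$ with respect to the \emph{same} dilation, and whose derivative along the trajectories of (\ref{1.3}) is continuous, negative definite, and homogeneous of degree $\mu + \lambda$.

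Second, I would exploit homogeneity on a compact set. Let $\dot V(z)$ denote the derivative of $V$ along (\ref{1.3}) and consider the unit level set $\Sigma = \{ z : V(z) = 1\}$, which is compact by radial unboundedness and positive definiteness. Since $\dot V$ is continuous and negative definite, it attains a strictly negative maximum $-c := \max_{z \in \Sigma} \dot V(z) < 0$ on $\Sigma$. Using the dilation map $\delta_\varepsilon(z) = (\varepsilon^{\gamma_1}z_1,\ldots,\varepsilon^{\gamma_k}z_k)$ and the homogeneity relations $V(\delta_\varepsilon z) = \varepsilon^{\mu} V(z)$ and $\dot V(\delta_\varepsilon z) = \varepsilon^{\mu+\lambda}\dot V(z)$, every nonzero $z$ can be written as $\delta_\varepsilon w$ for a unique $w \in \Sigma$ and $\varepsilon = V(z)^{1/\mu}$, whence $\dot V(z) = \varepsilon^{\mu+\lambda}\dot V(w) \le -c\,\varepsilon^{\mu+\lambda} = -c\,V(z)^{(\mu+\lambda)/\mu}$. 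Writing $\alpha = (\mu+\lambda)/\mu = 1 + \lambda/\mu$, the assumption $\lambda < 0$ together with $\mu > 0$ gives $\alpha \in (0,1)$.

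Third, I would close the global argument with the comparison lemma. Along any trajectory, $W(t) := V(z(t))$ satisfies $\dot W \le -c\,W^{\alpha}$ with $\alpha \in (0,1)$; integrating the dominating equation $\dot U = -c\,U^\alpha$, $U(t_0)=W(t_0)$, shows $U$ reaches zero at the finite time $t_0 + U(t_0)^{1-\alpha}/\bigl(c(1-\alpha)\bigr)$, so $V(z(t))$, and therefore $z(t)$, is forced to the origin within this settling time. This yields global finite-time stability of (\ref{1.3}).

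For the perturbed system (\ref{1.4}), I would argue locally using the same $V$. Along (\ref{1.4}) one has $\dot V = \dot V_{(\ref{1.3})} + \sum_i \partial_{z_i} V \cdot \tilde f_i$, where $\dot V_{(\ref{1.3})} \le -c\,V^{\alpha}$ from above. The local homogeneity condition, which forces the rescaled components $\tilde f_i(\delta_\varepsilon z)/\varepsilon^{\lambda+\gamma_i} \to 0$ as $\varepsilon \to 0$, means that after the same dilation change of variables the cross term is of strictly higher order than $-c\,V^{\alpha}$ near the origin. Hence there is a neighborhood of $z=0$ on which $\dot V \le -\tfrac{c}{2} V^{\alpha}$, and after verifying that trajectories starting in a suitable sublevel set remain trapped there, the comparison argument applies verbatim to give local finite-time stability. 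The main obstacle I anticipate is the last step: the converse Lyapunov function is quoted rather than built, so the genuine work is the scaling estimate that renders the perturbation $\tilde f$ negligible against the dominant homogeneous term uniformly on a level set, and the accompanying invariance bookkeeping that confines the trajectory to the region where the reduced inequality is valid.
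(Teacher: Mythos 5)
This lemma is stated in the paper without proof: it is quoted from the cited literature (Hong--Xu--Huang \cite{fmg26} for the finite-time/homogeneity criterion and Rosier \cite{Rosier} for the converse Lyapunov construction), so there is no in-paper argument to compare against. Your reconstruction is the standard proof from those references and is essentially correct: Rosier's homogeneous converse Lyapunov function, the rescaling of $\dot V$ onto the compact level set $\Sigma$ to obtain $\dot V \le -c\,V^{\alpha}$, and the comparison lemma are exactly the right ingredients, and your treatment of the perturbed system via the higher-order cross term is the argument used in \cite{fmg26}.

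Two small points deserve tightening. First, $\alpha = 1 + \lambda/\mu \in (0,1)$ requires not just $\mu > 0$ but $\mu > -\lambda$; this is harmless because Rosier's theorem lets you choose the homogeneity degree $\mu$ of $V$ as large as you like (and in any case $\alpha < 1$ with $\dot V < 0$ off the origin already suffices for finite-time convergence), but as written the inference is incomplete. Second, the limit defining local homogeneity of $\tilde f$ is pointwise in $z$, whereas your estimate $\dot V \le -\tfrac{c}{2}V^{\alpha}$ near the origin needs the $o(\varepsilon^{\mu+\lambda})$ bound to hold uniformly for $w$ ranging over the compact set $\Sigma$; you correctly flag this as the remaining work, and it is where continuity of $\tilde f$ plus compactness of $\Sigma$ must be invoked explicitly. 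With those repairs the proposal is a complete and faithful proof of the quoted lemma.
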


\begin{lemma}\label{l3}
If the equilibrium of a closed-loop system is global asymptotic stable and local finite-time stable,
then it is also global finite-time stable.
\end{lemma}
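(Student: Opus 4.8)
The plan is to combine the two hypotheses through a two-phase argument: a global \emph{reaching phase} that drives every trajectory into a neighborhood where finite-time convergence is guaranteed, followed by a local \emph{finite-time phase} inside that neighborhood. Denote the equilibrium by $z = 0$. By the assumption of local finite-time stability, there exists an open neighborhood $\mathcal{U}$ of the origin together with a settling-time function $T(\cdot)$ such that every trajectory remaining in $\mathcal{U}$ reaches the origin in finite time and stays there. Local finite-time stability also entails local Lyapunov stability, so I would first extract from it a smaller neighborhood $\mathcal{U}' \subseteq \mathcal{U}$ that is positively invariant, i.e., any trajectory entering $\mathcal{U}'$ never subsequently leaves $\mathcal{U}$.

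Next, fix an arbitrary initial condition $z_0 \in {\mathbb R}^k$. Global asymptotic stability guarantees that the corresponding solution satisfies $z(t) \to 0$ as $t \to \infty$. Consequently there is a finite time $t_1 = t_1(z_0)$ at which the trajectory first enters $\mathcal{U}'$. From $t_1$ onward the trajectory stays inside $\mathcal{U}$ by positive invariance, so the local finite-time property applies: there is a further finite time $t_2$ after which $z(t) \equiv 0$. The total time to reach the equilibrium is therefore $t_1 + t_2 < \infty$, which is finite for every $z_0$.

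Finally, I would assemble these facts into the definition of global finite-time stability: global Lyapunov stability is already furnished by global asymptotic stability, and the finite settling time for arbitrary $z_0$ is exactly what the two-phase argument produces. The main obstacle is the transition between the two phases, namely making rigorous the claim that once a trajectory enters the local region it cannot escape before converging. This is precisely where local Lyapunov stability is needed to construct the invariant neighborhood $\mathcal{U}'$; without it, a trajectory could in principle wander out of $\mathcal{U}$ and fail to inherit the finite-time bound, so this step deserves the most care.
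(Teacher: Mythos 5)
Your two-phase argument is correct, and it is the standard proof of this fact. Note that the paper itself does not prove Lemma~\ref{l3} at all: it is quoted as a known result from the cited literature (\cite{fmg26}--\cite{Rosier}), so there is no in-paper proof to compare against. Your reasoning --- use global asymptotic stability to reach, in finite time $t_1$, a positively invariant neighborhood $\mathcal{U}'$ contained in the domain of attraction of the local finite-time property, then invoke the finite settling time $t_2$ from $z(t_1)$ --- is exactly the argument one finds in those references, and you correctly flag the only delicate step, namely that $\mathcal{U}'$ must be chosen (via Lyapunov stability, which is available both from the local finite-time hypothesis and from global asymptotic stability) so that the trajectory cannot exit the finite-time region before settling. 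One small remark: with the usual Bhat--Bernstein definition of finite-time stability the neighborhood in the definition is already invariant until the settling time, so your extra construction of $\mathcal{U}'$ is not strictly needed, but it makes the proof robust to weaker definitions and is a legitimate refinement rather than a gap.
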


\section{Time-varying formation tracking of multiple manipulators}

In this section, we are concerned with the time-varying formation tracking problems where the formations of the {MMS}
is time-varying and the leader has varying vectors of generalized coordinate derivatives.

Before moving on, some auxiliary variables are given.
Let the $i$th manipulator's estimated value of $a_0(t)$ be $a_i(t) \in {\mathbb R}^m$, $\forall i \in \mathcal V$.
For any $i,j \in \mathcal V$ and $t \in [t_s,t_{s+1})$, some auxiliary variables are defined as follows:
\begin{equation}\label{2.1}
  \left\{
  \begin{array}{lll}
    \bar q_{ij} = q_i - q_j - \eta_{si} + \eta_{sj},\\ \\
    \bar {\dot q}_{ij} = \dot q_i - \dot q_j,\\ \\
    \bar a_{ij} = a_i - a_j.
  \end{array}
  \right.
\end{equation}
\begin{remark}
The variable $\bar q_{ij}$ presented in (\ref{2.1}) contains the information of the time-varying formations and switches at the time sequence
$\{ t_1,t_2,\ldots,t_s,\ldots \}$. Besides, $\bar q_{ij} = 0$ means that the formation described by $\digamma _s$
is obtained for the MMS.
\end{remark}
Let $\bar q_{i} = q_i - \eta_{si} - x_0$, $\bar {\dot q}_{i} = {\dot q}_i - v_0$, $\bar a_{i} = a_i - a_0$ and
\begin{equation}\label{1.5}
\begin{array}{lll}
  \ddot q_{ri} = a_i - \varphi(sig(\sum\limits_{j \in {\mathcal N}_i}  {w_{ij}} \bar q_{ij} + p_i \bar q_{i})^{\alpha_1}) \\
  ~~~~~~~~~~ - \psi(sig(\sum\limits_{j \in {\mathcal N}_i}  {w_{ij}} \bar {\dot q}_{ij} + p_i \bar {\dot q}_{i})^{\alpha_2}),
  \end{array}
\end{equation}
where $\alpha_1, \alpha_2 > 0$ are positive constants, $\varphi$ and $\psi$ are continuous odd vector fields
satisfying $z^T\varphi(z) > 0$, $z^T\psi(z) > 0$ $(\forall z \neq 0)$,
$\varphi(z) = c_1 z + o(z)$ and $\psi(z) = c_2 z + o(z)$ around $z = 0$ for some positive constants $c_1$ and $c_2$,
$w_{ij}$ is the $(i,j)$th entry of the adjacency matrix $\mathcal W$, $p_i$ is the weight between the leader and the $i$th manipulator,
$sig(z)^\kappa = {\rm col}\{ |z_1|^{\kappa} sign(z_1), \cdots, |z_m|^{\kappa} sign(z_m) \}$,
$sign(\cdot)$ is the signum function, $\forall \kappa \in \mathbb R, z \in {\mathbb R}^m$.
We then propose the following distributed estimator-based control
\begin{subequations}\label{1.6}
\begin{equation}\label{1.7}
   ~~~~~\tau_i = {\mathcal H}_i(q_i)\ddot q_{ri} + {\mathcal C}_i(q_i,\dot q_i){\dot q_i} + g_i(q_i),
\end{equation}
\begin{equation}\label{1.8}
    \dot a_i = -\beta sgn(\sum\limits_{j \in {\mathcal N}_i}  {w_{ij}} \bar a_{ij} + p_i \bar a_i),
\end{equation}
\end{subequations}
where $\beta$ is presented in Assumption \emph{\textbf{A2}},
$sgn(z) = {\rm col}\{ sign(z_1), \cdots, sign(z_m) \}$, $\forall z \in {\mathbb R}^m$.

\begin{remark}
As shown in (\ref{1.5}), the sliding-mode estimator (\ref{1.8}) provides a distributed estimated value
$a_i$ to construct the auxiliary variable $q_{ri}$.
Moreover, inspired by the inverse dynamics control technology proposed in \cite{Spong}-\cite{SuZ},
the input torque $\tau_i$ presented in (\ref{1.7}) is developed by using $q_{ri}$.
Thus, the control law (\ref{1.6}) is called distributed estimator-based control.
\end{remark}

\begin{theoremx}\label{t1}
Suppose that Assumptions \textbf{A1} and \textbf{A2} hold. Using (\ref{1.6}) for (\ref{1.1}),
if $0 < \alpha_1  < 1$ and $\alpha_2  = 2\alpha_1 /(\alpha_1 + 1)$, then (\ref{1.2}) holds, $i.e.$, the time-varying formation tracking is achieved for the {MMS}.
\end{theoremx}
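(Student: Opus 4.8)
The plan is to decompose the closed loop into two cascaded finite-time subsystems and then invoke Lemmas \ref{l2} and \ref{l3}. Substituting (\ref{1.7}) into (\ref{1.1}) and cancelling the Coriolis and gravity terms, invertibility of the inertia matrix ${\mathcal H}_i(q_i)$ yields $\ddot q_i = \ddot q_{ri}$, so the manipulator dynamics reduce exactly to the designed reference acceleration (\ref{1.5}). Writing $\bar a = {\rm col}(\bar a_1,\ldots,\bar a_n)$, $\bar q = {\rm col}(\bar q_1,\ldots,\bar q_n)$ and $\bar{\dot q} = {\rm col}(\bar{\dot q}_1,\ldots,\bar{\dot q}_n)$, and using $\bar q_{ij}=\bar q_i-\bar q_j$, $\bar{\dot q}_{ij}=\bar{\dot q}_i-\bar{\dot q}_j$, $\bar a_{ij}=\bar a_i-\bar a_j$ together with $\mathcal L\mathbf 1=0$, the coupling sums in (\ref{1.5}) and (\ref{1.8}) collapse into the single matrix ${\mathcal M}$ of Lemma \ref{l5}. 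I would first show that the estimator (\ref{1.8}) drives $\bar a\to 0$ in finite time, and then show that, once $\bar a\equiv 0$, the formation error $(\bar q,\bar{\dot q})$ is globally finite-time stable.

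For the estimator stage, the stacked dynamics become $\dot{\bar a} = -\beta\, sgn({\mathcal M}\bar a) - \mathbf 1_n\otimes\dot a_0$. I would take $V_1 = \tfrac12\bar a^T{\mathcal M}\bar a$, positive definite by Lemma \ref{l5}, and compute $\dot V_1 = -\beta\|{\mathcal M}\bar a\|_1 - ({\mathcal M}\bar a)^T(\mathbf 1_n\otimes\dot a_0)$. Bounding the second term by $\|\dot a_0\|_\infty\|{\mathcal M}\bar a\|_1$ and using Assumption \textbf{A2} ($\sup_t\|\dot a_0\|<\beta$) gives $\dot V_1\le -(\beta-\sup_t\|\dot a_0\|)\|{\mathcal M}\bar a\|_1$, which together with the spectral bounds on ${\mathcal M}$ yields $\dot V_1\le -c\sqrt{V_1}$ for some $c>0$. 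Hence $\bar a$ reaches the sliding surface ${\mathcal M}\bar a=0$, i.e. $\bar a=0$, in finite time $T_1$, and the sliding-mode (Filippov) behaviour keeps $\bar a\equiv 0$ thereafter even while $a_0(t)$ varies.

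Once $a_i\equiv a_0$, (\ref{1.5}) gives the error dynamics $\dot{\bar q}=\bar{\dot q}$, $\ddot{\bar q}_i = -\varphi(sig(({\mathcal M}\bar q)_i)^{\alpha_1}) - \psi(sig(({\mathcal M}\bar{\dot q})_i)^{\alpha_2})$. I would introduce $e={\mathcal M}\bar q$ (a diffeomorphism since ${\mathcal M}$ is SPD) and use $V_2 = \tfrac12\dot e^T{\mathcal M}^{-1}\dot e + U(e)$, where the potential $U$ is built so that $\nabla U = \Phi(sig(e)^{\alpha_1})$ (with $\Phi$ the block map induced by $\varphi$); the position terms then cancel and $\dot V_2 = -\dot e^T\Psi(sig(\dot e)^{\alpha_2})\le 0$ by $z^T\psi(z)>0$. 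LaSalle's principle (Lemma \ref{l1}) confines trajectories to $\{\dot e=0\}$, on which invariance forces $\varphi(sig(e_i)^{\alpha_1})=0$ and hence $e=0$, so the origin is globally asymptotically stable. For local finite-time stability I would replace $\varphi,\psi$ by their leading parts $c_1 z,c_2 z$ near the origin and assign the dilation $\gamma_1=1$, $\gamma_2=(\alpha_1+1)/2$; the homogeneity identities $\gamma_2=\lambda+\gamma_1$ and $\alpha_1\gamma_1=\alpha_2\gamma_2=\lambda+\gamma_2$ are mutually consistent precisely when $\alpha_2=2\alpha_1/(\alpha_1+1)$, giving degree $\lambda=(\alpha_1-1)/2<0$ for $0<\alpha_1<1$, while the $o(z)$ remainders render the full system locally homogeneous. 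Lemma \ref{l2} then gives local finite-time stability, and Lemma \ref{l3} upgrades global asymptotic plus local finite-time stability to global finite-time stability. Finally, $\bar q\to 0$ in finite time gives $q_i-q_j-\eta_{si}+\eta_{sj}=\bar q_i-\bar q_j\to 0$ and $\dot q_i-v_0=\bar{\dot q}_i\to 0$, while the closedness condition $\sum_{i}\eta_{si}=0$ turns $\tfrac1n\sum_i q_i-x_0$ into $\tfrac1n\sum_i\bar q_i\to 0$, which establishes (\ref{1.2}).

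The main obstacle I anticipate is the formation stage: building a valid Lyapunov function despite the coupling ${\mathcal M}$ sitting inside the nonlinearities—the change of variables $e={\mathcal M}\bar q$ is exactly what makes the cross terms cancel and the dissipation term sign-definite (this step leans on the structure of $\varphi,\psi$)—and verifying the local-homogeneity limit for the $o(z)$ remainders so that Lemma \ref{l2} applies. The estimator stage is comparatively routine, but care is needed to justify maintenance of the sliding surface under $\dot a_0\ne 0$ via Filippov solutions, and to guarantee boundedness on the initial interval $[t_0,T_1]$ so that the total settling time fits inside $[t_s,t_{s+1})$ as assumed.
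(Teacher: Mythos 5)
Your proposal is correct and follows essentially the same route as the paper's proof: the same cascade decomposition (finite-time convergence of the sliding-mode estimator via $V=\tfrac12\bar a^T\mathcal M\bar a$, then LaSalle with the identical energy function — your $\tfrac12\dot e^T\mathcal M^{-1}\dot e+U(e)$ equals the paper's $\tfrac12\bar{\dot q}^T\mathcal M\bar{\dot q}+\sum_k\int_0^{y(k)}\varphi(sig(\sigma)^{\alpha_1})d\sigma$ — then local homogeneity of negative degree with a dilation that is just the paper's $(2,\alpha_1+1)$ rescaled by $1/2$, upgraded to global finite-time stability via Lemma \ref{l3}). The points you flag as obstacles (boundedness on $[t_0,T_f]$ and the final translation to (\ref{1.2}) using $\sum_i\eta_{si}=0$) are handled the same way, or even less explicitly, in the paper.
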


\begin{proof}
The proof proceeds in the following three steps.
First, the simplification of the close-loop system is derived from the {finite-time stability of sliding-mode estimators}.
Second, the global asymptotic stability is proved based on the {LaSalle's Invariance Principle}.
Thirdly, the global finite-time stability is demonstrated using {finite-time stability arguments for homogeneous systems}.

For the first presentation, the simplification of the close-loop system is carried out.
Substituting (\ref{1.5}) and (\ref{1.7}) into (\ref{1.1}) gives
\begin{equation}\label{1.9}
\begin{array}{lll}
  {\mathcal H}_i(q_i)[\ddot q_i - a_i + \varphi(sig(\sum\limits_{j \in {\mathcal N}_i}  {w_{ij}} \bar q_{ij} + p_i \bar q_{i})^{\alpha_1}) \\
  ~~~~~~~~~~~~~~~~~ + \psi(sig(\sum\limits_{j \in {\mathcal N}_i}  {w_{ij}} \bar {\dot q}_{ij} + p_i \bar {\dot q}_{i})^{\alpha_2})] = 0.\\
  \end{array}
\end{equation}
The positive definiteness of ${\mathcal H}_i(q_i)$ implies that the eigenvalues of ${\mathcal H}_i(q_i)$ is greater than $0$.
Then the combination of (\ref{1.8}) and (\ref{1.9}) yields the following \textsl{cascade} system:
\begin{equation}\label{2.0}
 \begin{array}{lll}
  \ddot q_i = a_i - \varphi(sig(\sum\limits_{j \in {\mathcal N}_i}  {w_{ij}} \bar q_{ij} + p_i \bar q_{i})^{\alpha_1}) \\
  ~~~~~~~~~ - \psi(sig(\sum\limits_{j \in {\mathcal N}_i}  {w_{ij}} \bar {\dot q}_{ij} + p_i \bar {\dot q}_{i})^{\alpha_2}),\\ \\
  \dot a_i = -\beta {sgn}(\sum\limits_{j \in {\mathcal N}_i}  {w_{ij}} \bar a_{ij} + p_i \bar a_i),
  \end{array}
\end{equation}
Let $\bar a$ be the column stack vector of $\bar a_i$, $\forall i \in \mathcal V$.
The sliding-mode estimator (\ref{1.8}) can be rewritten as
\begin{equation}\label{0.9}
  \dot {\bar a} = -\beta {sgn}(\mathcal{M} \bar a) - 1_n \otimes a_0,
\end{equation}
where $1_n$ denotes the \textit{n}-dimensional column vector whose elements are all one.
By {Lemma \ref{l5}}, $\mathcal{M}$ is symmetric positive definite.
Take the Lyapunov function ca.
.
ndidate $V_0 = 1/2 {\bar a}^T \mathcal{M} \bar a$ for system (\ref{0.9}).
By the similar analysis in {Theorem 3.1} of {\cite{fmg09}}, we get that
\begin{equation*}
    \dot V_0 \leq - (\beta - \mathop {\sup }\nolimits_{t \in \mathcal J} \| \dot a_0(t) \|)
    \frac {\lambda_{\min}(\mathcal M)\sqrt {2V_0}}{\sqrt {\lambda_{\max}(\mathcal M)}}.
\end{equation*}
Therefore, for the sliding-mode estimator (\ref{1.8}), there exists a bounded settle time given by
\begin{equation*}
  T_f = t_0 + \frac {\sqrt {2\lambda_{\max}(\mathcal M) V_0(t_0)}} {\lambda_{\min}(\mathcal M)(\beta - \mathop {\sup }\nolimits_{t \in \mathcal J} \| \dot a_0(t) \|)}
\end{equation*}
such that $a_i = a_0$ when $t \geq T_f$, $\forall i \in \mathcal{V}$.
We then show that for bounded initial values $q_i(t_0)$ and $\dot q_i(t_0)$,
invoking (\ref{1.6}) for (\ref{1.1}),
the states $q_i(t)$ and $\dot q_i(t)$ remain bounded when $t \in [t_0,T_f]$, $\forall i \in \mathcal V$.
The distributed sliding-mode estimator (\ref{1.8}) implies that $a_i(t)$ remain bounded
for any initial value $a_i(t_0)$ when $t \in [t_0,T_f]$.
For bounded states $q_i$ and $\dot q_i$, $\forall i \in \mathcal V$, equation (\ref{2.1}) implies that
$\bar q_i$, $\bar {\dot q}_i$, $\bar q_{ij}$ and $\bar {\dot q}_{ij}$ remain bounded when $t \in [t_0,T_f]$, $\forall j \in \mathcal V$.
It thus follows from (\ref{2.0}) that $\ddot q_i$ is bounded with respect to bounded states
$a_i$, $\bar q_i$, $\bar {\dot q}_i$, $\bar q_{ij}$ and $\bar {\dot q}_{ij}$.
Thus, we can obtain that $q_i(t)$ and $\dot q_i(t)$ remain bounded for bounded initial values
$q_i(t_0)$ and $\dot q_i(t_0)$ when $t \in [t_0,T_f]$, $\forall i \in \mathcal V$.
Thus, using (\ref{1.5}) and (\ref{1.6}) for (\ref{1.1}), when $t \geq T_f$, the closed-loop dynamics of system (\ref{1.1}) can be rewritten as
\begin{equation}\label{2.3}
 \begin{array}{lll}
  \bar {\ddot q}_i = - \varphi(sig(\sum\limits_{j \in {\mathcal N}_i}  {w_{ij}} \bar q_{ij} + p_i \bar q_{i})^{\alpha_1}) \\
  ~~~~~ - \psi(sig(\sum\limits_{j \in {\mathcal N}_i}  {w_{ij}} \bar {\dot q}_{ij} + p_i \bar {\dot q}_{i})^{\alpha_2}),
  \end{array}
\end{equation}
where $\bar {\ddot q}_i = {\ddot q}_i - a_0$. It thus follows from (\ref{2.1}) that $\bar {\dot q}_i$ and $\bar {\ddot q}_i$ are the first-order
and second-order derivatives of $\bar q_i$, $\forall i \in \mathcal V$.
Let $\bar q$, $\bar {\dot q}$ and $\bar {\ddot q}$ be the column stack vectors of $\bar q_i$, $\bar {\dot q}_i$ and $\bar {\ddot q}_i$ respectively,
$\forall i \in \mathcal V$. System (\ref{2.3}) can be rewritten as
\begin{equation}\label{2.2}
 \begin{array}{lll}
  \bar {\ddot q} = - \varphi(sig(\mathcal{M} \bar q)^{\alpha_1} )- \psi(sig(\mathcal{M} \bar {\dot q})^{\alpha_2}).
  \end{array}
\end{equation}
The first presentation shows that for bounded initial values $a_i(t_0)$, $q_i(t_0)$ and $\dot q_i(t_0)$,
the states $a_i(t)$, $q_i(t)$ and $\dot q_i(t)$ remain bounded when $t \in [t_0,T_f]$,
and the close-loop dynamics of (\ref{1.1}) under the control algorithms (\ref{1.5}) and (\ref{1.6})
is equivalent to equation (\ref{2.2}) when $t \geq T_f$.
\par
For the second presentation, the global asymptotic stability of system (\ref{2.2}) is analyzed.
Let an auxiliary variable $y = \mathcal{M} \bar q \in {\mathbb R}^{mn}$. Then $\dot y = \mathcal{M} \bar {\dot q}$ and $\ddot y = \mathcal{M} \bar {\ddot q}$.
When $t \geq T_f$, for (\ref{2.2}), consider the {Lyapunov function candidate} $V = V_1 + V_2$ with
\begin{equation*}
 \begin{array}{lll}
  V_1 = \sum\limits_{k = 1}^{mn} {\int_0^{y(k)} {\varphi (sig{(\sigma)^{\alpha_1 }})} {\kern 1pt} d\sigma},\\ \\
  V_2 = \frac{1}{2} \bar {\dot q}^T \mathcal{M} \bar {\dot q},
  \end{array}
\end{equation*}
where ${y(k)} \in \mathbb R\ (k = 1,2, \ldots ,mn)$ denotes the $k$th element of the vector $y$.
By {Lemma \ref{l5}}, $\mathcal{M} $ is symmetric positive definite.
It thus follows from the definition of $\varphi (\cdot)$ that the {Lyapunov function candidate} $V$ is positive definite.
Taking the derivatives of $V_1$ and $V_2$ along (\ref{2.2}) renders that
\begin{equation*}
 \begin{array}{lll}
  \dot V_1 &=& \sum\limits_{k = 1}^{mn} \dot y(k) \varphi (sig{[y(k)]^{\alpha_1 }}) \\
  &=& \dot y^T \varphi (sig(y)^{\alpha_1 }), \\ \\
  \dot V_2 &=& \bar {\dot q}^T \mathcal{M} \bar {\ddot q} \\
  &=& - \dot y^T \varphi(sig(y)^{\alpha_1} )- \dot y^T \psi(sig(\dot y)^{\alpha_2}),
  \end{array}
\end{equation*}
It thus follows that
\begin{equation*}
 \begin{array}{lll}
  \dot V &=& \dot V_1 + \dot V_2 \\
  &=& - \dot y^T \psi(sig(\dot y)^{\alpha_2}).
  \end{array}
\end{equation*}
Considering that $\psi(\cdot)$ is continuous odd function, we can conclude that $\dot V \leq 0$.
Besides, $\dot V = 0$ gives that $\dot y = 0$.
It thus follows from the positive definiteness of $\mathcal M$ that $\dot V = 0$ if and only if $\bar {\dot q} = 0$,
which implies that $\bar {\ddot q} = 0$. It thus follows from (\ref{2.2}) that $\varphi(sig(\mathcal{M} \bar q)^{\alpha_1} ) = 0$,
which means that $\bar q = 0$. By {LaSalle's Invariance Principle} in {Lemma \ref{l1}},
for any bounded $\bar q(T_f)$ and $\bar {\dot q}(T_f)$, the states $\bar q \to 0$ and $\bar {\dot q} \to 0$ as $t \to \infty$.
Hence, the second presentation shows that the equilibrium $(\bar q = 0,\bar {\dot q} = 0)$ of system (\ref{2.2}) is global asymptotic stable.
\par
For the third presentation, the global finite-time stability of system (\ref{2.2}) is analyzed.
First, the local finite-time stability is proven by invoking {Lemma \ref{l2}} and {\ref{l3}}.
To this end, let $z_1 = \bar q$, $z_2 = \bar {\dot q}$ and $z = {\rm col}(z_1,z_2)$.
By the definition of $\varphi(\cdot)$ and $\psi(\cdot)$ right after (\ref{1.5}), we can get that system (\ref{2.2})
can be written as
\begin{equation}\label{2.4}
\left\{ \begin{array}{lll}
  {\dot z_1} = z_2,\\
  {\dot z_2} = f(z_1,z_2) + {\tilde f}(z_1,z_2),
\end{array} \right.
\end{equation}
where
\begin{equation*}
\left\{ \begin{array}{lll}
  f(z_1,z_2) = - c_1 sig(\mathcal{M} z_1)^{\alpha_1} - c_2 sig(\mathcal{M} z_2)^{\alpha_2},\\
  {\tilde f}(z_1,z_2) = - o( sig(\mathcal{M} z_1)^{\alpha_1} ) - o( sig(\mathcal{M} z_2)^{\alpha_2} ).
  \end{array}\right.
\end{equation*}
It visibly follows that ($z_1 = 0,z_2 = 0$) is the equilibrium of system (\ref{2.4}).
Considering that $\alpha_2 = 2\alpha_1/(\alpha_1 + 1)$, we can conclude that system (\ref{2.4}) is {locally homogeneous}
of degree $\lambda  = \alpha_1  - 1<0$ with respect to dilation ${\rm col}(2_{mn},(\alpha_1 + 1)_{mn})$,
where $2_{mn}$ and $(\alpha_1 + 1)_{mn}$ are {$mn$-dimensional} column vectors whose elements are $2$ and $\alpha_1 + 1$ respectively.
Hence, the third presentation shows that the equilibrium $(\bar q = 0,\bar {\dot q} = 0)$ of system (\ref{2.2}) is finite-time asymptotic stable.
\par
By {Lemma \ref{l3}}, the second and third presentations show that for bounded $\bar q(T_f)$ and $\bar {\dot q}(T_f)$,
there exists a time point $\bar{T}_f > T_f$ that the states $\bar q \to 0$ and $\bar {\dot q} \to 0$ as $t \to \bar{T}_f$.
By the first presentation, $\bar q(T_f)$ and $\bar {\dot q}(T_f)$ remain bounded for bounded initial value $\bar q(t_0)$,
$\bar {\dot q}(t_0)$ and $a_i(t_0)$. Hence, for bounded initial value $\bar q(t_0)$, $\bar {\dot q}(t_0)$
and $a_i(t_0)$, the states $\bar q \to 0$ and $\bar {\dot q} \to 0$ as $t \to \bar{T}_f$. This completes the proof.
\end{proof}

Note that the following \textsl{necessary and sufficient condition} can be easily obtained by some simple transformation for {Theorem \ref{t1}}.

\begin{corollaryx}\label{c1}
Suppose that $0 < \alpha_1  < 1$, $\alpha_2  = 2\alpha_1 /(\alpha_1 + 1)$, and Assumption \textbf{A2} holds. Using (\ref{1.5}) and (\ref{1.6}) for (\ref{1.1}),
then (\ref{1.2}) holds ($i.e.$, the time-varying formation tracking is achieved for the {MMS}) if and only if Assumption \textbf{A1} holds.
\end{corollaryx}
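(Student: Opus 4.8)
The plan is to prove the two implications of the biconditional separately, dispatching sufficiency immediately and concentrating the argument on necessity. For sufficiency, suppose Assumption \textbf{A1} holds; then together with Assumption \textbf{A2} and the parameter relations $0 < \alpha_1 < 1$, $\alpha_2 = 2\alpha_1/(\alpha_1+1)$ we are exactly in the hypotheses of Theorem \ref{t1}, so (\ref{1.2}) follows with no further work. The content of the corollary is therefore the converse: assuming Assumption \textbf{A2} and that (\ref{1.2}) holds, I would deduce Assumption \textbf{A1}, arguing by contraposition.

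The pivot of the whole argument is the equivalence, valid because $\mathcal{G}$ is undirected, between Assumption \textbf{A1} and positive definiteness of $\mathcal{M} = (\mathcal{L} + diag\{\mathcal{P}\}) \otimes I_m$. One direction is Lemma \ref{l5}. For the other I would use that $\mathcal{L}$ is symmetric positive semidefinite with null space spanned by the indicator vectors of the connected components of $\mathcal{G}$: if \textbf{A1} fails there is a connected component $C$ containing no pinned node, whence $(\mathcal{L} + diag\{\mathcal{P}\}) 1_C = 0$, and so $v := 1_C \otimes e$ is a nonzero null vector of $\mathcal{M}$ for any $e \in \mathbb{R}^m \setminus \{0\}$. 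In other words, failure of \textbf{A1} is precisely the failure of the single step in the LaSalle argument of Theorem \ref{t1} that read $\mathcal{M} \bar q = 0 \Rightarrow \bar q = 0$.

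With this null vector in hand I would construct an explicit counterexample to (\ref{1.2}). Take a leader with constant acceleration, which satisfies \textbf{A2} for any $\beta > 0$, and initialize each estimator at $a_i(t_0) = a_0$ so that the estimator error $\bar a$ vanishes identically and the closed loop coincides with the reduced system (\ref{2.2}) for all $t \geq t_0$. Choosing $\bar q(t_0) = v$ and $\bar{\dot q}(t_0) = 0$, and using $\mathcal{M} v = 0$ together with continuity and oddness of $\varphi, \psi$ (so that $\varphi(0) = \psi(0) = 0$), the right-hand side of (\ref{2.2}) evaluates to zero; hence $\bar{\ddot q} = 0$ and the state remains frozen at $(\bar q, \bar{\dot q}) = (v, 0)$. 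Thus $\bar q \equiv v \neq 0$ for all time, which contradicts the first two limits in (\ref{1.2}). This exhibits a non-origin equilibrium of (\ref{2.2}) and shows the formation tracking cannot be achieved when \textbf{A1} fails, completing the contrapositive.

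I expect the only genuine obstacle to be a modeling subtlety rather than the algebra: one must make sure the non-origin equilibrium $(v,0)$ is actually reachable by the full closed loop of estimators and manipulators, not merely by the idealized reduced system. Taking a constant-acceleration leader and starting the estimators already at $a_0$ removes the transient on $[t_0, T_f]$ and makes (\ref{2.2}) exact from $t_0$, which is what renders the counterexample clean; the remaining steps are routine.
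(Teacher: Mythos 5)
Your proposal is correct, and its skeleton matches the paper's: sufficiency is delegated to Theorem \ref{t1}, and necessity is handled by contraposition from the failure of Assumption \textbf{A1}. The difference lies entirely in how the necessity half is executed. The paper's own argument is a three-sentence qualitative appeal: if \textbf{A1} fails there is an isolated subset of manipulators receiving no information from the leader, so they ``cannot necessarily follow'' its trajectory --- which asserts the conclusion rather than proving it. You instead make this precise through the spectral characterization: for an undirected $\mathcal{G}$, failure of \textbf{A1} yields a connected component with no pinned node, hence a nonzero vector $v = 1_C \otimes e$ in the kernel of $\mathcal{M}$, and since $\varphi(0)=\psi(0)=0$ the point $(\bar q,\bar{\dot q})=(v,0)$ is a genuine non-origin equilibrium of (\ref{2.2}); choosing a constant-acceleration leader and $a_i(t_0)=a_0$ makes the reduced system exact from $t_0$, so the trajectory is frozen at $v\neq 0$ and (\ref{1.2}) fails (the first limit fails when $C\subsetneq\mathcal V$, and the centroid condition fails when $C=\mathcal V$, using $\sum_i\eta_{si}=0$). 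This buys a rigorous counterexample where the paper offers only intuition, and it correctly identifies that \textbf{A1} is exactly the positive-definiteness of $\mathcal{M}$ used in the LaSalle step of Theorem \ref{t1}. Your closing caveat about reachability of the equilibrium by the full closed loop (estimators included) is well placed and is resolved by your initialization; the only cosmetic point is that formation switching perturbs $\bar q$ at the instants $t_s$, so you should state that the contradiction is already obtained on the first dwell interval $[t_0,t_1)$.
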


\begin{proof}
The sufficiency of {Corollary \ref{c1}} is proved as the same as in {Theorem \ref{t1}}.
Next we show the necessity part by contradiction.
If {Assumption \emph{\textbf{A1}}} dose not hold, there exists an isolated subset of manipulators,
which cannot obtain any information of the leader directly or mediately.
It follows that the evolution of the close-loop dynamics of these manipulators is carried out without any information of the leader.
Thus, these manipulators cannot necessarily follow the trajectory of the leader.
This ends the proof.
\end{proof}

Let a switching graph ${\mathcal G}(t) = \{ \mathcal V, \mathcal E(t), \mathcal W(t) \}$ describe the interaction of the {MMS}),
where $\mathcal W(t) = [w_{ij}(t)]_{n \times n}$ represents the weight adjacency matrix.
Let ${\mathcal P}(t) = [{p_1}(t),{p_2}(t),\ldots,{p_n}(t)]^T$ be the switching nonnegative weight vector
between the \textit{n} nodes and the leader.
Then the following corollary can be obtained for the case, in which the communication topology is switching.

\begin{corollaryx}\label{c2}
Suppose that \textbf{A2} holds and the leader are reachable to the MMS under ${\mathcal G}(t)$ and ${\mathcal P}(t)$. Let the control algorithms be replaced by
\begin{equation*}
\left\{ \begin{array}{lll}
  \ddot q_{ri} = a_i - \varphi(sig(\sum\limits_{j \in {\mathcal N}_i}  {w_{ij}}(t) \bar q_{ij} + p_i(t) \bar q_{i})^{\alpha_1}) \\
  ~~~~~~~~~ - \psi(sig(\sum\limits_{j \in {\mathcal N}_i}  {w_{ij}}(t) \bar {\dot q}_{ij} + p_i(t) \bar {\dot q}_{i})^{\alpha_2}),\\ \\
  \tau_i = {\mathcal H}_i(q_i)\ddot q_{ri} + {\mathcal C}_i(q_i,\dot q_i){\dot q_i} + g_i(q_i),\\ \\
  \dot a_i = -\beta sgn(\sum\limits_{j \in {\mathcal N}_i}  {w_{ij}}(t) \bar a_{ij} + p_i(t) \bar a_i),
  \end{array} \right.
\end{equation*}
then (\ref{1.2}) holds ($i.e.$, the time-varying formation tracking is achieved for the {MMS}) if and only if Assumption \textbf{A1} holds.
\end{corollaryx}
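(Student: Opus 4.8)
The plan is to reduce Corollary \ref{c2} to the fixed-topology result of Theorem \ref{t1} applied interval by interval, exploiting the dwell-time hypothesis that $h=\min_s(t_{s+1}-t_s)$ is large enough. On each interval $[t_s,t_{s+1})$ of the common refinement of the formation and topology switching sequences, the switching signals are constant, so $\mathcal W(t)\equiv\mathcal W_s$, $\mathcal P(t)\equiv\mathcal P_s$, and the matrix $\mathcal M(t)\equiv\mathcal M_s=(\mathcal L_s+diag\{\mathcal P_s\})\otimes I_m$ is fixed. Since the leader is reachable to the MMS under $\mathcal G(t)$ and $\mathcal P(t)$, the hypothesis of Lemma \ref{l5} is met in each active mode, so every $\mathcal M_s$ is symmetric positive definite. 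First I would observe that on $[t_s,t_{s+1})$ the closed-loop system produced by the switched control law coincides exactly with the cascade (\ref{2.0}) with $\mathcal M$ replaced by $\mathcal M_s$; hence the whole three-step argument of Theorem \ref{t1}—finite-time convergence of the sliding-mode estimator via $V_0=\tfrac12\bar a^T\mathcal M_s\bar a$, global asymptotic stability of the reduced system (\ref{2.2}) by LaSalle's Invariance Principle (Lemma \ref{l1}), and local finite-time stability by homogeneity (Lemmas \ref{l2} and \ref{l3})—carries over verbatim on that interval.

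Second, to make the interval-wise argument self-consistent, I would establish uniform eigenvalue bounds. Because only finitely many admissible topologies occur, set $\underline\lambda=\min_s\lambda_{\min}(\mathcal M_s)>0$ and $\bar\lambda=\max_s\lambda_{\max}(\mathcal M_s)$; substituting these into the settle-time formula of Theorem \ref{t1} yields a bound on $T_f^s$, and on the subsequent finite-time convergence time, that is independent of $s$. The dwell-time assumption is precisely what guarantees that this uniform settle time $t_f^s$ lies inside $[t_s,t_{s+1})$, so (\ref{1.2}) is attained before the next switch. I would then close the induction over $s$ by checking that the data handed to interval $s{+}1$ are bounded: the coordinates $q_i$ and velocities $\dot q_i$ are continuous across a switch while only the offsets $\eta_{si}$ and the weights jump, so $\bar q(t_{s+1})$, $\bar{\dot q}(t_{s+1})$ and $a_i(t_{s+1})$ remain bounded and the finite-time argument can be restarted. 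This proves sufficiency.

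Necessity mirrors Corollary \ref{c1}: if the leader fails to reach every node (the switching analogue of Assumption \textbf{A1}), then there is a subset of manipulators receiving no leader information—directly or indirectly—over the whole horizon, so the closed-loop evolution of that subset is entirely leader-independent and cannot satisfy the centroid and velocity conditions in (\ref{1.2}), a contradiction.

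I expect the main obstacle to be the bookkeeping at the switching instants rather than any single stability estimate. The Lyapunov functions $V_0$ and $V=V_1+V_2$ are built from the mode-dependent matrix $\mathcal M_s$ and from $y=\mathcal M_s\bar q$, so they are discontinuous across switches; the delicate point is to verify that the uniform settle time genuinely fits within $h$ and that the reset of $\bar q_{ij}$ caused by the jump in $\eta_{si}$ keeps the per-interval initial conditions inside the region where the homogeneity-based local finite-time estimate of Lemma \ref{l2} applies. Securing these uniformities—finite mode set, uniform positive definiteness, and a dwell time exceeding the worst-case settle time—is what upgrades the per-interval convergence into the global claim of Corollary \ref{c2}.
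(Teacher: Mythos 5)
You should first be aware that the paper does not actually supply a proof here: it states only that the result ``can be easily derived by the combination of Theorem \ref{t1} and Lemma 6 presented in \cite{WangHong}'' and omits the details. Your reconstruction is therefore necessarily a guess at the intended argument, and it takes a genuinely different route. You prove the corollary by a dwell-time restart: freeze the topology on each interval of the common refinement of the two switching sequences, apply the whole of Theorem \ref{t1} with $\mathcal M$ replaced by the mode-dependent $\mathcal M_s$, take uniform eigenvalue bounds over the finitely many modes, and induct over intervals. The route the paper gestures at is a switched-systems lemma from \cite{WangHong} for finite-time consensus under variable coupling topology, which is in the spirit of a common-Lyapunov-function argument for the switched homogeneous system: there the mode-independent part of the Lyapunov analysis (note that $\dot V=-\dot y^T\psi(sig(\dot y)^{\alpha_2})$ in Theorem \ref{t1} does not actually depend on which $\mathcal M_s$ is active in the damping term) is exploited so that no slow-switching hypothesis on the topology is needed. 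What your approach buys is self-containedness; what the paper's cited route buys is validity under arbitrary topology switching.

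This difference is not merely stylistic, and it exposes two concrete soft spots in your argument. First, Corollary \ref{c2} as stated places no dwell-time restriction on the switching of $\mathcal G(t)$ and $\mathcal P(t)$ (the dwell-time assumption in Section 2.2 concerns the formation sequence only), so a proof that requires the topology to be constant for longer than a worst-case settle time establishes a strictly weaker statement than the one claimed. Second, your ``uniform settle time independent of $s$'' is not automatic: the settling time furnished by Lemmas \ref{l2} and \ref{l3} is a function of the initial condition, and to bound it uniformly over the bounded set of post-switch states you need an additional fact (e.g., that the settling-time function of a homogeneous finite-time stable system is continuous, hence bounded on compacta); you assert the bound by substituting $\underline\lambda,\bar\lambda$ into the estimator's settle-time formula, but that formula only governs the sliding-mode estimator stage, not the subsequent finite-time convergence of $(\bar q,\bar{\dot q})$ for system (\ref{2.2}), for which no explicit settling-time formula is derived in the paper. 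Your necessity argument matches the paper's Corollary \ref{c1} argument and is fine. If you repair the two points above --- either by adding an explicit slow-switching hypothesis and a settling-time continuity argument, or by switching to the common-Lyapunov-function route of \cite{WangHong} --- the proof goes through.
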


\begin{proof}
The proof can be easily derived by the combination of {Theorem \ref{t1}} and {Lemma 6} presented in \cite{WangHong}, and is omitted here.
\end{proof}

\begin{remark}
Note that the functions ${\varphi }( \cdot)$ and ${\psi }( \cdot)$ can be easily selected, such as
$x$, $sat(x)$, and $tanh(x)$, where $sat(\cdot)$ and $tanh(\cdot)$ denote the saturation function and the hyperbolic tangent function respectively.
Besides, by the boundedness of $sat(\cdot)$ and $tanh(\cdot)$, we can conclude that the control law in this paper is bounded by the boundedness
of the dynamic terms in system (\ref{1.1}).
\end{remark}

\begin{remark}
The dynamics of the leader can also be described by the Euler-Lagrange equation
${\mathcal H}_0(q_0){\ddot q}_0 + {\mathcal C}_0(q_0,\dot q_0){\dot q_0} + g_0(q_0) = \tau_0$,
which gives a additional task for designing $\tau_0$.
In this case, the MMS has a master-slave structure, in which the master manipulator acts as the leader
while the slave manipulators act as followers {\cite{CWRHT,WCHTY}}.
By designing suitable $\tau_0$ such that Assumption \textbf{A2} holds following \cite{Mei01}, the main results presented
in this paper can still be effective.
\end{remark}

\begin{remark}
Comparing with \cite{Mei01,ChenFeng}, in which {global asymptotic stability} is achieved, we study the {global finite-time stability} for time-varying formation tracking which is more practical and challenging than traditional {global asymptotic stability}, especially for robotic systems.
Different from \cite{Wang01,Wang02}, in which the constant agreement value is taken into account,
we consider the time-varying formation tracking problem of multi-robot systems with a dynamic leader.
\end{remark}

\section{\sc Simulations}
In this section, simulations are presented to illustrate the effectiveness of the proposed algorithms.
We consider the time-varying formation tracking problem for a {MMS} containing six manipulators ($i.e.$, agents) with three desired formations.
Each agent is assumed to be a planer robotic manipulator with two revolute joints, $i.e.$, $q_i \in \mathbb{R}^2,~\forall i \in \mathcal V$.
The dynamic model and the physical parameters presented in {\cite{SuZ}} are invoked.
For simplify, in our simulation, we choose $w_{ij}=1$ if agent $i$ can access the information of agent $j$,
$w_{ij}=0$ otherwise; $p_{i}=1$ if agent $i$ can obtain the information of the leader directly, $p_i=0$ otherwise.
The interaction topology is shown in Fig.\ref{Fig.0}. The {Laplacian} matrix $\mathcal L$ is
\begin{equation*}
  {\mathcal L} =
  \left[ {\begin{array}{*{20}{c}}
{\begin{array}{*{20}{c}}
\begin{array}{l}
2\\
 0\\
-1\\
-1\\
0\\
0
\end{array}&\begin{array}{l}
0\\
0\\
0\\
0\\
0\\
0
\end{array}&\begin{array}{l}
-1\\
 0\\
1\\
 0\\
0\\
0
\end{array}&\begin{array}{l}
-1\\
 0\\
0\\
 1\\
0\\
0
\end{array}
\end{array}}&\begin{array}{l}
0\\
0\\
0\\
0\\
1\\
-1
\end{array}&\begin{array}{l}
0\\
0\\
0\\
0\\
-1\\
1
\end{array}
\end{array}} \right],
\end{equation*}
and the nonnegative weight vector is given by $\mathcal P = {\rm col}(1,1,0,0,1,0)$.
The elements of the initial values $q_i(0)$, $\dot q_i(0)$ and $a_i(0)$ are randomly selected from $[-6,6]$.

The finite set of desired formations ${\digamma_c} = \{ {\digamma _0},{\digamma _1}, {\digamma _2} \}$ is shown in Fig.\ref{Fig.1},
the local coordinates in the 2D plane is given as ${\digamma_s}  = \{ {\eta _{s1}},{\eta _{s2}}, \ldots ,{\eta _{s6}} \}$,
$s =0, 1, 2$. The details of $\eta_{si} = [a_{si}, b_{si}]^T$, $i= 1, \ldots, 6$, are presented in Table \ref{tab.1}.
The sampling period is adopted to be $10$ $ms$. The simulation time span is selected as $t \in [0,50]$.
The time-varying formation $\digamma(t)$ are given as following
\begin{equation*}
  \digamma(t)=
  \left\{
    \begin{array}{lll}
    \digamma_0, ~~t \in [0,15),\\
    \digamma_1, ~~t \in [15,35),\\
    \digamma_2, ~~t \in [35,50).
    \end{array}
  \right.
\end{equation*}
The trajectory of the leader is given by
$x_0(t) = {\rm col}[ 30\cos (0.05\pi t),30\sin (0.05\pi t) ]$, and $v_0(t)$, $a_0(t)$ can be calculated easily,
where $t \geq 0$.
Without loss of generality, let $\varphi(z) = 100z$ and $\psi(z) = 100z$.
The other control parameters are selected as follows: $\alpha_1 = 0.2$, $\beta = 4$, and $\alpha_2$ can be easily computed.

\begin{table}[h]
 \centering
 \caption{The local coordinates. }\label{tab.1}
 \begin{tabular}{cccc}
  \hline
   $a_{si},b_{si}$ & $s = 0$ & $s = 1$ & $s = 2$ \\
  \hline
   $i = 1$ & $1, \sqrt 3$ & $2, \sqrt 3$ & $2/3, \sqrt 3 /2$ \\
   $i = 2$ & $2, 0$ & $2, 0$ & $8/3, 0$ \\
   $i = 3$ & $1, - \sqrt 3$  & $2, - \sqrt 3$ & $2/3, - \sqrt 3 /2$ \\
   $i = 4$ & $- 1, - \sqrt 3$ & $- 2, - \sqrt 3$ & $- 4/3, - \sqrt 3$ \\
   $i = 5$ & $- 2, 0$ & $- 2,0$ & $- 4/3, 0$ \\
   $i = 6$ & $- 1, \sqrt 3$  & $- 2, \sqrt 3$ & $- 4/3, \sqrt 3$ \\
  \hline
 \end{tabular}
\end{table}

\begin{figure}[H]
  \centering
  \includegraphics[width=5cm]{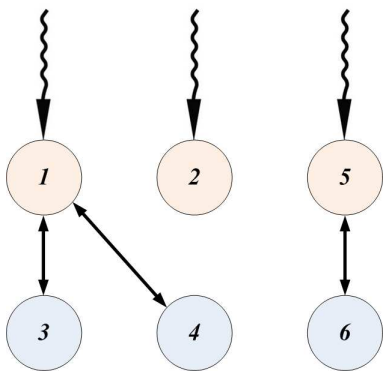}\\
  \caption{The interaction graph $\mathcal G$, where the agent $1$, $2$, and $5$ can access the information of the leader directly. }\label{Fig.0}
\end{figure}

\begin{figure}[H]
  \centering
  \includegraphics[width=11cm]{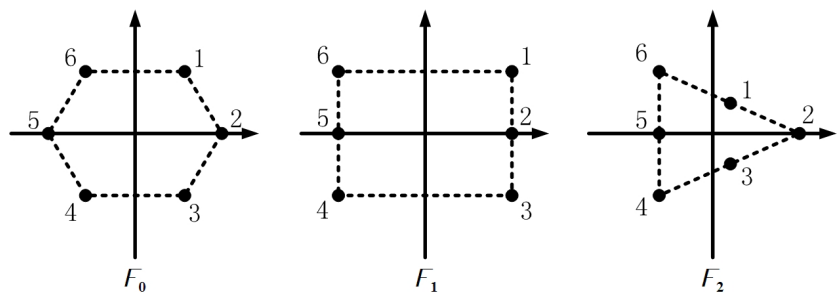}\\
  \caption{The formations from left to right are $\digamma_0$, $\digamma_1$ and
  $\digamma_2$ respectively. The black point $i$ denote the robot $i$ in the local coordinate. }\label{Fig.1}
\end{figure}

\begin{figure}[H]
  \centering
  \includegraphics[width=11cm]{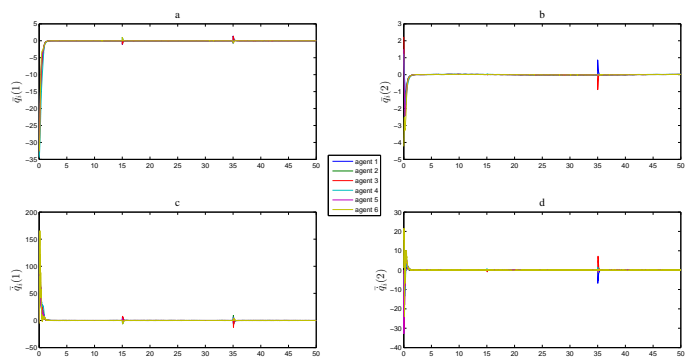}\\
  \caption{Trajectories of ${\bar q}_i$ and ${\bar {\dot q}}_i$ under $\mathcal G$.}\label{Fig.3}
\end{figure}

\begin{figure}[H]
  \centering
  \includegraphics[width=9cm]{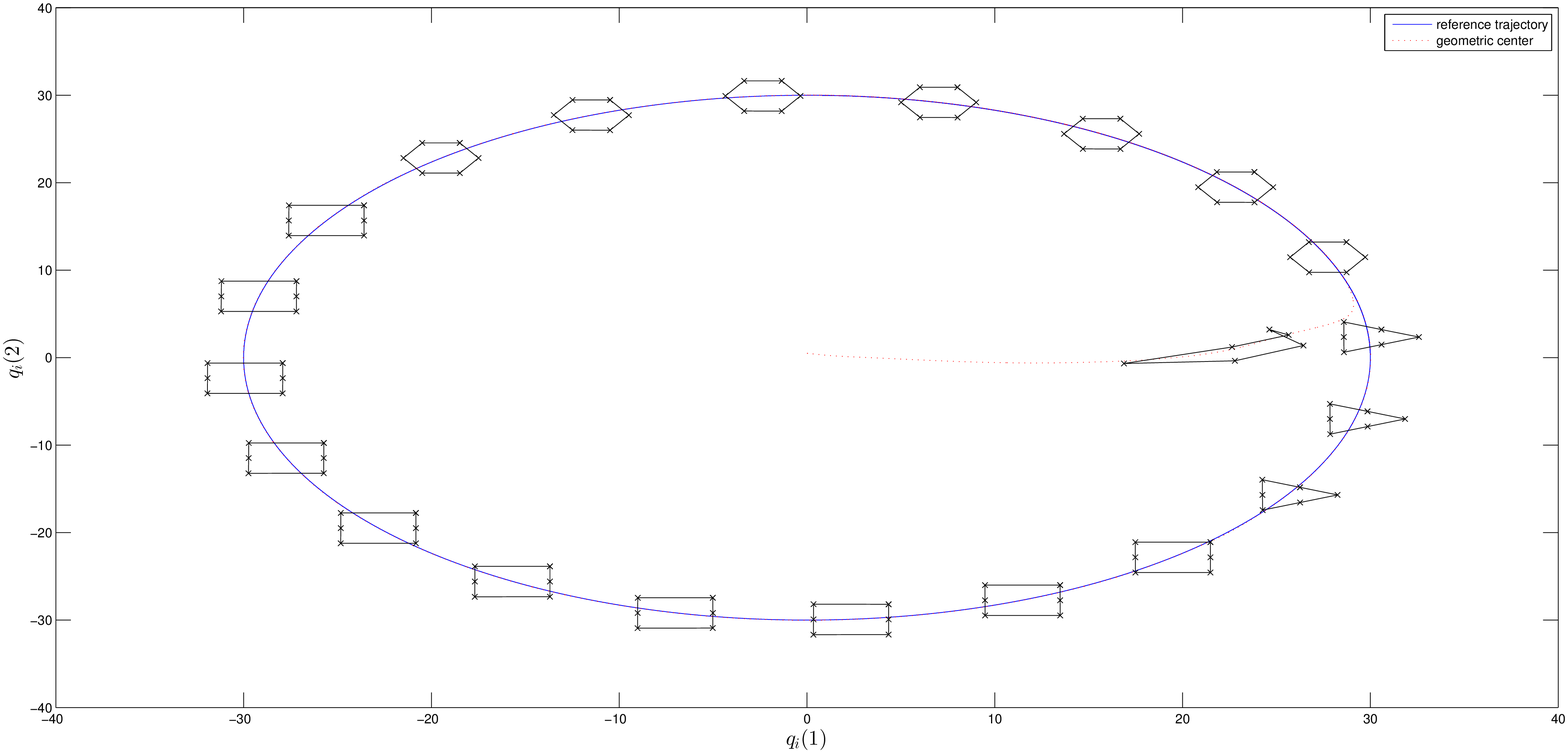}\\
  \caption{Trajectories and the formation of the six robots under $\mathcal G$.}\label{Fig.5}
\end{figure}

The simulation results are presented in Fig.\ref{Fig.3} and Fig.\ref{Fig.5}.
Fig.\ref{Fig.3} shows that the tracking errors ${\bar q}_i$ and
${\bar {\dot q}}_i$ defined in (\ref{2.1}) converge to zero in finite time at each dwell time interval,
which means the time-varying formations of the {MMS} in the 2D plane and the tracking of the leader can
be achieved simultaneously, $i.e.$, the {time-varying formation tracking} is accomplished.
Additionally, the trajectory of the manipulators in 2D space is illustrated in Fig.\ref{Fig.5}.
It follows that the robots can reach the desired time-varying formation and
the geometric center of the {MMS} follows the leader as required.
It is clear in Fig.\ref{Fig.3} and Fig.\ref{Fig.5} that using the control algorithm (\ref{1.6})
under the aforementioned configurations, the time-varying formation tracking can be achieved for the MMS.

\begin{remark}
It is shown from picture $b$ and $d$ in Fig.\ref{Fig.3} that the second elements of ${\bar q}_i$ and
${\bar {\dot q}}_i$ do not change at the switching time instant $t = 15$.
Note that the time-varying formation $\digamma(t)$ changes from $\digamma_0$ to $\digamma_1$.
By the set of $\digamma_0$ and $\digamma_1$ in Table 1, $b_{si}$ stays the same at the switching time instant
$t = 15$, which thus gives that ${\bar q}_i$ and ${\bar {\dot q}}_i$ do not change at the switching time instant.
\end{remark}

\section{Conclusion}
For multiple manipulator systems ({MMSs}) under fixed and switching graphs, the time-varying formation tracking
problem is addressed using inverse dynamics control technologies.
Based on the functional characteristics of {MMSs}, an explicit formulation of time-varying formation is presented.
The conditions (including sufficient conditions, necessary and sufficient conditions) on the interaction topology and
control parameters are derived.
Simulation results are presented to verify the effectiveness of the proposed algorithms.
A few interesting issues, which are not addressed in this paper, concern the time-varying formation tracking problems
of uncertain Euler-Lagrange systems and the extension of the presented approaches to the case of the polynomial trajectories.
These issues will be considered in our future work.
\medskip


\end{document}